\pgfplotsset{width=7cm,compat=1.3}
\newcommand{\nop}[1]{} %JYM: This new commend allows to comment text within "{}" in "\nop{}".
\newcommand{\shorten}[1]{}
\newtheorem{theorem}{Theorem}
\newtheorem{definition}{Definition}
\newtheorem{lemma}{Lemma}
\newtheorem{corollary}{Corollary}
\newtheorem{example}{Example}
\newtheorem{construction}{Construction}
\newcommand{\signed}%
    {{\unskip\nobreak\hfill\penalty50
      \hskip2em\hbox{}\nobreak\hfil $\blacksquare$
      \parfillskip=0pt \finalhyphendemerits=0 \par}}
\newenvironment{proof}[1]
    {
    \bf{Proof:}\rm{\noindent{#1 }}\ignorespaces
    }
    {\signed\addvspace\medskipamount}
\begin{document}

% paper title
\title{Repair Duality with Locally Repairable and Locally Regenerating Codes}
% author names and affiliations
% use a multiple column layout for up to three different
% affiliations
% avoiding spaces at the end of the author lines is not a problem with
% conference papers because we don't use \thanks or \IEEEmembership
% for over three affiliations, or if they all won't fit within the width
% of the page, use this alternative format:
%
\author{
	\IEEEauthorblockN{Danilo Gligoroski\IEEEauthorrefmark{1}, Katina Kralevska\IEEEauthorrefmark{1}, Rune E.~Jensen\IEEEauthorrefmark{2}, and Per Simonsen\IEEEauthorrefmark{3}\\ Email: danilog@ntnu.no, katinak@ntnu.no, runeerle@idi.ntnu.no, per.simonsen@memoscale.com}
	\IEEEauthorblockA{\IEEEauthorrefmark{1}Department of Information Security and Communication Technology, NTNU, Norwegian University of Science and Technology}
	\IEEEauthorblockA{\IEEEauthorrefmark{2}Department of Computer Science, NTNU, Norwegian University of Science and Technology}
	\IEEEauthorblockA{\IEEEauthorrefmark{3}MemoScale AS, Norway}
}

% make the title area
\maketitle

\begin{abstract}	
	We construct an explicit family of locally repairable and locally regenerating codes whose existence was proven in a recent work by Kamath et al. about codes with local regeneration but no explicit construction was given. This explicit family of codes is based on HashTag codes. HashTag codes are recently defined vector codes with different vector length $\alpha$ (also called a sub-packetization level) that achieve the optimal repair bandwidth of MSR codes or near-optimal repair bandwidth depending on the sub-packetization level. We applied the technique of parity-splitting code construction. We show that the lower bound on the size of the finite field for the presented explicit code constructions can be lower than the one given in the work of Kamath et al. Finally, we discuss the importance of having two ways for node repair with locally regenerating HashTag codes: repair only with local parity nodes or repair with both local and global parity nodes. To the best of the authors' knowledge, this is the first work where this duality in repair process is discussed. We give a practical example and experimental results in Hadoop where we show the benefits of having this repair duality.% Which repair strategy is better depends on the concrete system and code parameters such as the repair bandwidth of the code, the number of I/O operations, the access time for the contacted parts and the size of the stored file.
\end{abstract}

% no keywords
{\bfseries {Keywords}}: Vector codes, Repair bandwidth, Repair locality, Exact repair, Parity-splitting, Hadoop.

% For peer review papers, you can put extra information on the cover
% page as needed:
% \begin{center} \bfseries EDICS Category: 3-BBND \end{center}
%
% for peer review papers, inserts a page break and creates the second title.
% Will be ignored for other modes.
\IEEEpeerreviewmaketitle

\section{Introduction}
%An efficient repair of unavailable data with erasure coding in distributed storage systems is an important feature. 
The repair efficiency of erasure coding in distributed storage systems is measured with two main metrics: the amount of transferred data during a repair process (\emph{repair bandwidth}) and the number of accessed nodes in a repair process (\emph{repair locality}). Two types of erasure codes that are optimal with respect to these two metrics have emerged: \emph{Regenerating Codes} (RCs) \cite{5550492} and \emph{Locally Repairable Codes} (LRCs) \cite{journals/tit/GopalanHSY12,conf/infocom/OggierD11,6195703}.

Minimum Storage Regenerating (MSR) codes are an important class of RCs that minimize the amount of data stored per node and the repair bandwidth. Explicit constructions of MSR codes can be found in \cite{7463553,journals/corr/KralevskaGJO16}.
LRCs relax the maximum distance separable (MDS) requirement in order to minimize the number of nodes accessed during a repair. Studies on implementation and performance evaluation of LRCs can be found in \cite{Huang:2013:PCF:2435204.2435207,conf/usenix/HuangSXOCG0Y12,journals/pvldb/SathiamoorthyAPDVCB13,7593121}. Combining the benefits of RCs and LRCs in one storage system can bring huge savings in practical implementations. 
For instance, repair bandwidth savings by RCs are important when repairing huge amounts of data, while a fast recovery and an access to small number of nodes enabled by LRCs are desirable for repair of frequently accessed data. 

Several works present code constructions that combine the benefits of RCs and LRCs \cite{6655894,6846301,7282575}. 
Rawat et al. in \cite{6655894} and Kamath et al. in \cite{6846301} have independently investigated codes with locality in the context of vector codes, and they call them locally repairable codes with local minimum storage regeneration (MSR-LRCs) and codes with local regeneration, respectively. 
Rawat et al. \cite{6655894} provided an explicit construction, based on Gabidulin maximum rank-distance codes, of vector linear codes with all-symbol locality for the case when the local codes are MSR codes. However, the complexity of these codes increases exponentially with the number of nodes due to the two-stage encoding. 
In \cite{6846301}, Kamath et al. gave an existential proof without presenting an explicit construction. 
Another direction of combining RCs and LRCs is to use repair locality for selecting the accessed nodes in a RC \cite{7282575}, while an interpretation of LRCs as exact RCs was presented in \cite{7541379}.
Two different erasure codes, product and LRC codes, are used to optimize for recovery performance and reduce the storage overhead in \cite{188446}.

\textbf{Our Contribution:}
We construct an explicit family of locally repairable and locally regenerating codes whose existence was proven in a recent work by Kamath et al. \cite{6846301} about codes with local regeneration. In that work, an existential proof was given, but no explicit construction was given. Our explicit family of codes is based on HashTag codes \cite{7463553,journals/corr/KralevskaGJO16}.
HashTag codes are MDS vector codes with different vector length $\alpha$ (also called a sub-packetization level) that achieve the optimal repair bandwidth of MSR codes or near-optimal repair bandwidth depending on the sub-packetization level.
We apply the technique of parity-splitting of HashTag codes in order to construct codes with locality in which the local codes are regenerating codes and which hence, enjoy both advantages of locally repairable codes as well as regenerating codes.

We also show (although just with a concrete example) that the lower bound on the size of the finite field where these codes are constructed, given in the work by Kamath et al. \cite{6846301}, can be lower. The presented explicit code construction has a practical significance in distributed storage systems as it provides system designers with greater flexibility in terms of selecting various system and code parameters due to the flexibility of HashTag code constructions.

Last but not least, we discuss the repair duality and its importance. Repair duality is a situation of having two ways to repair a node: to repair it only with local parity nodes or repair it with both local and global parity nodes.   To the best of the authors' knowledge, this is the first work that discusses the repair duality and how it can be applied based on concrete system and code parameters.

The paper is organized as follows. Section \ref{mathPrel} presents mathematical preliminaries. In Section \ref{paritySplitting}, we describe a framework for explicit constructions of locally repairable and locally regenerating codes. The repair process is analyzed in Section \ref{repair} where we explain the repair duality. Experimental results of measurements in Hadoop are given in Section \ref{Hadoop}. Conclusions are summarized in Section \ref{summary}.

\section{Mathematical Preliminaries and Related Work}\label{mathPrel}
Inspired by the work of Gopalan et al. about locally repairable codes \cite{journals/tit/GopalanHSY12}, Kamath et al. extended and generalized the concept of locality  in \cite{6846301}. In this paper, we use notation that is mostly influenced (and adapted) from those two papers.

\begin{definition}[\cite{6846301}]
	A $\mathbb{F}_q$-linear vector code of block length $n$ is a code $\mathcal{C} \in \big(\mathbb{F}_q^\alpha \big)^n$ having a symbol alphabet $\mathbb{F}_q^\alpha$ for some $\alpha \geq  1$, i.e. 
	$$\mathcal{C} = \{ \mathbf{c} = (\mathbf{c}_1, \mathbf{c}_2, \ldots, \mathbf{c}_n), \mathbf{c}_i \in \mathbb{F}_q^\alpha \text{ for all } i \in [n] \}, $$ 
	and satisfies the additional property that for given $\mathbf{c},\mathbf{c}' \in \mathcal{C}$ and $a,b \in \mathbb{F}_q$, 
	$$a \mathbf{c} + b\mathbf{c}' = (a \mathbf{c}_1 + b\mathbf{c}_1', a \mathbf{c}_2 + b\mathbf{c}_2', \ldots, a \mathbf{c}_n + b\mathbf{c}_n' )$$ also belongs to $\mathcal{C}$ where $a\mathbf{c}_i$ is a scalar multiplication of the vector $\mathbf{c}_i$. $\blacksquare$
\end{definition}

Throughout the paper, we refer to the vectors $\mathbf{c}_i$ as vector symbols or nodes. Working with systematic codes, it holds that for the systematic nodes $\mathbf{c}_i = \mathbf{d}_i$ for $1\leq i\leq k$ and for the parity nodes $\mathbf{c}_{k+i} = \mathbf{p}_i$ for $1\leq i\leq r$. For every vector code $\mathcal{C} \in \big(\mathbb{F}_q^\alpha \big)^n$ there is an associated scalar linear code $\mathcal{C}^{(s)}$ over $\mathbb{F}_q$ of length $N=\alpha n$. Accordingly, the dimension of the associated scalar code $\mathcal{C}^{(s)}$ is $K = \alpha k$. For a convenient notation, the generator matrix $G$ of size ${K \times N}$ of the scalar code $\mathcal{C}^{(s)}$ is such that each of the $\alpha$ consecutive columns corresponds to one code symbol $\mathbf{c}_i, i \in [n]$, and they are considered as $n$ thick columns $\mathbf{W}_i,  i \in [n]$. For a subset $\mathcal{I} \subset [n]$ we say that it is an information set for $\mathcal{C}$ if the restriction $G |_\mathcal{I}$ of $G$ to the set of thick columns with indexes lying in $\mathcal{I}$ has a full rank, i.e. $\text{rank}(G|_\mathcal{I}) = K$. 

The minimum cardinality of an information set is referred as quasi-dimension $\kappa$ of the vector code $\mathcal{C}$. As the vector code $\mathcal{C}$ is  $\mathbb{F}_q$-linear, the minimum distance $d_{\min}$ of $\mathcal{C}$ is equal to the minimum Hamming weight of a non-zero codeword in $\mathcal{C}$. Finally, a vector code of block length $n$, scalar dimension $K$, minimum distance $d_{\min}$, vector-length parameter $\alpha$ and quasi-dimension $\kappa$ is shortly denoted with $[n,K,d_{\min},\alpha,\kappa]$. While in the general definition of vector codes in \cite{6846301} the quasi-dimension $\kappa$ does not necessarily divide the dimension $K$ of the associated scalar, for much simpler and convenient description of the codes in this paper we take that $k=\kappa$, i.e. $K=\alpha \kappa$. In that case the erasure and the Singleton bounds are given by: 
\begin{equation}\label{SingletonVector}
d_{\min} \leq n - \kappa + 1.
\end{equation} 
In \cite{5550492}, Dimakis et al. studied the repair problem in a distributed storage system where a file of $M$ symbols from a finite field $\mathbf{F}_q$ is stored across $n$ nodes, each node stores $\frac{M}{k}$ symbols. They introduced the metric repair bandwidth $\gamma$, and proved that the repair bandwidth of a MDS code is lower bounded by
\begin{equation}
\gamma \geq \frac{M}{k}\frac{d}{d-k+1},
\label{optimalBW}
\end{equation}
where $d$ is the number of accessed available nodes (helpers).

\begin{lemma}[\cite{5550492}] The repair bandwidth of a $(n, k)$ MDS code is minimized for $d=n-1$. MSR codes achieve the lower bound of the repair bandwidth equal to
	\begin{equation}
	\gamma_{MSR}^{min}=\frac{M}{k}\frac{n-1}{n-k}.
	\label{optimalMSR}
	\end{equation}
\end{lemma}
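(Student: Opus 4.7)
The plan is to prove the statement in two parts: first the converse (monotonicity of the lower bound in $d$), and then achievability (existence of MSR codes that meet the bound). Since inequality~(\ref{optimalBW}) is already available as a proved statement from \cite{5550492}, the converse reduces to optimizing its right-hand side over the number of helpers $d$, while achievability can be invoked via explicit MSR constructions such as those referenced in the introduction.

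For the converse, I would rewrite the right-hand side of~(\ref{optimalBW}) as
\begin{equation*}
\frac{M}{k}\cdot\frac{d}{d-k+1} \;=\; \frac{M}{k}\left(1 + \frac{k-1}{d-k+1}\right).
\end{equation*}
Regarded as a function of $d$ on the integers $d\in\{k,k+1,\dots,n-1\}$, the term $\frac{k-1}{d-k+1}$ is strictly decreasing in $d$, so the whole expression is strictly decreasing in $d$. Hence the lower bound is minimized at the largest admissible $d$. Since a repair procedure for a single failed node can contact at most the $n-1$ surviving nodes, the maximum feasible value is $d=n-1$, and substituting yields the expression $\gamma_{MSR}^{\min}=\frac{M}{k}\cdot\frac{n-1}{n-k}$ in~(\ref{optimalMSR}).

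For achievability, I would appeal to the definition of MSR codes as those regenerating codes that simultaneously attain the minimum per-node storage $\frac{M}{k}$ and the cut-set repair bandwidth; concrete constructions meeting both extremes exist, e.g. the Product-Matrix MSR construction and the HashTag codes of \cite{7463553,journals/corr/KralevskaGJO16} cited earlier in the paper. Plugging $d=n-1$ into the information-flow/cut-set analysis of \cite{5550492} shows that any MSR code with $d=n-1$ attains the value $\gamma_{MSR}^{\min}$, completing the lemma.

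The main obstacle is really on the achievability side: the converse is an elementary monotonicity argument, but exhibiting MSR codes that saturate~(\ref{optimalBW}) at $d=n-1$ requires a non-trivial algebraic construction. In the context of this paper, however, achievability is taken as known and we rely on the explicit families already in the literature, so the lemma is essentially a one-line optimization of the bound~(\ref{optimalBW}) followed by a pointer to existing MSR constructions.
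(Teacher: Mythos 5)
Your argument is correct and is essentially the standard one: the paper itself offers no proof of this lemma, merely citing Dimakis et al.~\cite{5550492}, and in that reference the bound~(\ref{optimalBW}) is derived from an information-flow (cut-set) analysis, then minimized over $d$ exactly as you do by observing $\frac{d}{d-k+1}=1+\frac{k-1}{d-k+1}$ is decreasing in $d$, with achievability deferred to explicit MSR constructions. Your reconstruction matches that route; no issues.
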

\nop{
\begin{lemma}[\cite{5550492}] A $(n, k)$ MSR code attains the minimum storage point of the optimal tradeoff curve between the storage and the repair bandwidth, i.e.,
	\begin{equation}
	(\alpha_{MSR}, \gamma_{MSR}^{min})=(\frac{M}{k}, \frac{M}{k} \frac{n-1}{n-k}),
	\label{optimalMSR}
	\end{equation}
	when $n-1$ helper nodes are contacted.
\end{lemma}
}
%The repair bandwidth is minimized when all $n-1$ helpers transmit a fraction of ${1}/(n-k)$ of the stored data. 
%Maximum reliability, i.e. MDS property, costs in locality.
A $(n, k)$ MSR code has the maximum possible distance $d_{min}=n-k+1$ in addition to minimizing the repair bandwidth, but it has the worst possible locality.
\begin{corollary} The locality of a $(n, k)$ MSR code is equal to $n-1$.
\end{corollary}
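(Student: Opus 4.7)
The plan is to derive the corollary as a one-line consequence of the preceding lemma, under the paper's framing (stated in the introduction) that ``repair locality'' is the number of nodes accessed during the repair process. First, I would recall from the lemma that for an $(n,k)$ MDS code the lower bound $\gamma \geq \frac{M}{k}\frac{d}{d-k+1}$ is strictly decreasing in $d$ on the admissible range $k \leq d \leq n-1$, so the infimum is attained precisely at $d = n-1$ and equals $\frac{M}{k}\frac{n-1}{n-k}$.

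Next, I would invoke the defining property of an MSR code: it simultaneously attains the minimum per-node storage $\alpha = M/k$ and the minimum repair bandwidth $\gamma_{MSR}^{min}$. Because this bandwidth minimum is achieved only at $d = n-1$, any MSR repair procedure is forced to contact exactly $n-1$ helper nodes; contacting fewer than $n-1$ surviving nodes would strictly increase the achievable bandwidth above $\gamma_{MSR}^{min}$ and therefore violate the MSR property.

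The last step is to translate the access count into the paper's notion of locality. Since repair locality is defined as the number of surviving nodes read during repair, the previous step immediately yields that the locality of a $(n,k)$ MSR code equals $n-1$.

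The main obstacle here is not technical but conceptual. In the classical LRC terminology of Gopalan et al., the locality of a symbol in an $(n,k)$ MDS code is $k$, since any $k$ symbols suffice to recover any other; one must therefore be careful that the corollary speaks about the bandwidth-optimal repair access pattern (the regenerating-code viewpoint used throughout the introduction) rather than about the minimal cardinality of a combinatorial repair set. Once this reading is fixed, no further calculation is needed and the corollary follows directly from the lemma.
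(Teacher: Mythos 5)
Your proof is correct and matches the paper's (implicit) reasoning: the paper states this corollary without an explicit proof, treating it as an immediate consequence of the preceding lemma, and you supply exactly the argument the authors intend — MSR repair attains $\gamma_{MSR}^{min}$, this requires $d=n-1$ helpers by monotonicity of the bound, and hence the repair access count is $n-1$. Your cautionary remark about the clash with the Gopalan-style notion of locality (under which any MDS code would have locality $k$) is a genuinely useful observation that the paper glosses over; the corollary is only meaningful under the regenerating-code reading of ``locality'' as the size of the bandwidth-optimal helper set, which is precisely how the authors use the term in context.
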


Any $[n,K,d_{\min},\alpha,\kappa]$ vector code $\mathcal{C}$ is MDS if and only if its generator matrix can be represented in the form $G = [I | P]$, where the $K \times (N - K)$ parity matrix 
		\begin{equation}\label{parityPart}
		%\hspace{-2.9cm}
		\mathbf{P}=
		\begin{bmatrix}
		G_{1, 1} & G_{1, 2} & \ldots & G_{1, \kappa}\\
		G_{2, 1} & G_{2, 2} & \ldots & G_{2, \kappa}\\
		\vdotswithin{1} & \vdotswithin{\alpha_n} & \ddots & \vdotswithin{{\alpha_n}^{k-1}}\\
		G_{\kappa, 1} & G_{\kappa, 2} & \ldots & G_{\kappa, n-\kappa}\\
		\end{bmatrix},
		\end{equation}	
possesses the property that every square block submatrix of $P$ is invertible. The $G_{i,j}$ entries are square sub-matrices of size $\alpha \times \alpha$, and a block submatrix is composed by different entries of $G_{i,j}$.

In order to analyze codes with local regeneration, Kamath et al. introduced a new family of vector codes called uniform rank-accumulation (URA) codes in \cite{6846301}. They showed that exact-repair MSR codes belong to the class of URA codes.

\begin{definition}\cite[Def. 2]{6846301}
	\label{KamathInformationLocality}
	Let $\mathcal{C}$ be a $[n,K,d_{\min},\alpha,\kappa]$ vector code with a generator matrix $G$. The code $\mathcal{C}$ is said to have $(l,\delta)$ information locality if there exists a set of punctured codes $\{\mathcal{C}_i \}_{i \in \mathcal{L}}$ of $\mathcal{C}$ with respective supports $\{S_i \}_{i \in \mathcal{L}}$ such that
	\begin{itemize}
		\item $|S_i| \leq l+\delta - 1,$
		\item $d_{\min}(\mathcal{C}_i)\geq \delta,$ and
		\item $\text{rank}(G|_{\bigcup_{i \in \mathcal{L}}})=K.$
	\end{itemize}
\end{definition}

If we put $\delta=2$ in Def.\ref{KamathInformationLocality}, then we get the definition of information locality introduced by Gopalan et al. \cite{journals/tit/GopalanHSY12}. 
%Next we define some properties of LRCs.
%\begin{definition}
%	The code symbol $c_j$ where $1\leq j\leq n$ from an $(n, k, d)$ linear code $C$ %has locality $(l, \delta)$ if there exists a subset of code symbols %$\{c_1,\ldots,c_n\}$ that includes $c_j$ and forms a "local" code with parameters %$(n\leq l+\delta-1, k\leq l, d\geq \delta)$.
%\end{definition}
They derived the upper bound for the minimum distance of a $(n, k, d)_q$ code with information locality $l$ for $\delta=2$ as 
	\begin{equation}
	d_{\min} \leq n-k-\left\lceil \frac{k}{l} \right\rceil+2.
	\label{distance1}
	\end{equation}

A general upper bound was derived in \cite{6846301} as
	\begin{equation}
	d_{\min} \leq n-k+1-\left( \left\lceil \frac{k}{l} \right\rceil - 1 \right)(\delta - 1) .
	\label{distance}
	\end{equation}

Huang et al. showed the existence of Pyramid codes that achieve the minimum distance given in (\ref{distance1}) when the field size is big enough \cite{Huang:2013:PCF:2435204.2435207}.
Finally, based on the work by Gopalan et al. \cite{journals/tit/GopalanHSY12} and Pyramid codes by Huang et al. \cite{Huang:2013:PCF:2435204.2435207}, Kamath et al. proposed a construction of codes with local regeneration based on a parity-splitting strategy in \cite{6846301}. 
\shorten{
\begin{figure}
	\begin{minipage}[b]{0.5\linewidth}
		\centering
		\includegraphics[width=8.9cm,height=5cm]{9-6-alfa09}
	\end{minipage}
	\caption{A systematic generator matrix of the associated scalar code. Here the black squares on the main diagonal represent the value 1, but the black squares in the parity parts represent the non-zero values in $\mathbf{F}_{32}$. Note that if the parity matrix is partitioned in $9\times9$ square submatrices, it has the same form as in equation (\ref{parityPart}).}
	\label{G0906alpha09}
\end{figure}
}

\section{Codes with Local Regeneration from HashTag Codes by Parity-Splitting}\label{paritySplitting}

In \cite{7463553,journals/corr/KralevskaGJO16}, a new class of vector MDS codes called HashTag codes is defined. HashTag codes achieve the lower bound of the repair bandwidth given in (\ref{optimalMSR}) for $\alpha=r^{\lceil\frac{k}{r}\rceil}$, while they have near-optimal repair bandwidth for small sub-packetization levels. HashTag codes are of a great practical importance due to their properties: flexible sub-packetization level, small repair bandwidth, and optimized number of I/O operations.  
We briefly give the basic definition of HashTag codes before we construct codes with local regeneration from them by using the framework of parity-splitting discussed in \cite{6846301}.

\begin{definition}\label{HashTagCodes}
	A $(n,k,d)_q$ HashTag linear code is a vector systematic code defined over an alphabet $\mathbb{F}_q^\alpha$ for some $\alpha \geq  1$. It encodes a vector $\mathbf{x} = (\mathbf{x}_1,\ldots,\mathbf{x}_k)$, where $\mathbf{x}_i = (x_{1,i}, x_{2,i},\ldots,x_{\alpha,i})^T \in \mathbf{F}_q^\alpha$ for $i \in [k] $, to a codeword $\mathcal{C}(\mathbf{x}) = \mathbf{c} = (\mathbf{c}_1, \mathbf{c}_2, \ldots, \mathbf{c}_n)$ where the systematic parts $\mathbf{c}_i=\mathbf{x}_i$ for $i \in [k]$ and the parity parts $\mathbf{c}_i=(c_{1,i}, c_{2,i},\ldots,c_{\alpha,i})^T$ for $i \in \{k+1,\ldots,n\}$ are computed by the linear expressions that have a general form as follows:
	\begin{equation}\label{LinEquations}
	c_{j,i}=\sum f_{\nu,j, i} x_{j_1,j_2},
	\end{equation}
	where $f_{\nu,j, i}\in \mathbb{F}_q$ and the index pair $(j_1,j_2)$ is defined in the $j$-th row of the index array $\mathbf{P}_{i-r}$. The $r$ index arrays $\mathbf{P}_1,\ldots,\mathbf{P}_r$ are defined as follows:
	\begin{equation*}
	\hspace{-2.9cm}
	\mathbf{P_1}=
		\begin{bmatrix}
			(1, 1) & (1, 2) & \ldots & (1, k)\\
			(2, 1) & (2, 2) & \ldots & (2, k)\\
			\vdotswithin{1} & \vdotswithin{\alpha_n} & \ddots & \vdotswithin{{\alpha_n}^{k-1}}\\
			(\alpha, 1) & (\alpha, 2) & \ldots & (\alpha, k)\\
		\end{bmatrix},
	\end{equation*}
	\vspace{-0.25cm}
	$$\ \ \ \ \ \ \ \ \ \ \ \ \ \ \ \ \ \ \ \ \ \ \ \ \ \ \ \ \ \ \ \ \ \ \ \ \ \ \ \ \ \overbrace{\ \ \ \ \ \ \ \ \ \ \ \ \ \ \ \ \ \ \ \ }^{\lceil \frac{k}{r} \rceil}$$
	\begin{equation*}
	\mathbf{P}_i=
	\begin{bmatrix}
	(1, 1) & (1, 2) & \ldots & (1, k) &  (?, ?) & \ldots & (?, ?) \\
	(2, 1) & (2, 2) & \ldots & (2, k) & (?, ?) & \ldots & (?, ?) \\
	\vdotswithin{1} & \vdotswithin{\alpha_n} & \ddots & \vdotswithin{{\alpha_n}^{k-1}}\\
	(\alpha, 1) & (\alpha, 2) & \ldots & (\alpha, k) & (?, ?) & \ldots & (?, ?) \\
	\end{bmatrix}.
	\end{equation*}
	where the values of the indexes $(?, ?)$ are determined by a scheduling algorithm that guarantees the code is MDS, i.e. the entire information $\mathbf{x}$ can be recovered from any $k$ out of the $n$ vectors $\mathbf{c}_i$. $\blacksquare$
\end{definition}

One scheduling algorithm for Def. \ref{HashTagCodes} is defined in \cite{7463553,journals/corr/KralevskaGJO16}.

\begin{example}\label{Ex:HashTag0906alfa09}
	The linear expressions for the parity parts for a $(9, 6)$ HashTag code with $\alpha=9$ are given here. The way how we obtain them is explained in Section 4.1 in \cite{journals/corr/KralevskaGJO16}. We give one set of coefficients $f_{\nu,j, i}$ for equation (\ref{LinEquations}) from the finite field $\mathbf{F}_{32}$ with irreducible polynomial $x^5+x^3+1$. This code achieves the lower bound of repair bandwidth in (\ref{optimalMSR}), i.e. the repair bandwidth is $\gamma = \frac{8}{3} = 2.67$ for repair of any systematic node.
	
	\shorten{
	Due to the big size $54 \times 81$, the systematic generator matrix of the associated scalar code is presented graphically in Fig. \ref{G0906alpha09} instead of presenting it numerically.
	%\vspace{0.2cm}
	}

	{\small
		\begin{tabular}{llll}
			%% increase table row spacing, adjust to taste
			\renewcommand{\arraystretch}{0.9}
			%\centering
			%% Some packages, such as MDW tools, offer better commands for making tables
			%% than the plain LaTeX2e tabular which is used here.
			\hspace{-1.0cm}\begin{tabular}{l@{}l@{}l@{}l@{}l@{}l@{}l@{}l@{}l@{}l@{}l@{}l}
				%\hline
				$c_{1,7}=$ & $\textbf{\ 7}x_{1,1}$ &+&$\textbf{10}x_{1,2}$ &+& $\textbf{18}x_{1,3}$ &+& $\textbf{11}x_{1,4}$ &+& $\textbf{17}x_{1,5}$ &+& $\textbf{\ 6}x_{1,6}$\\
				$c_{2,7}=$ & $\textbf{26}x_{2,1}$ &+& $\textbf{17}x_{2,2}$ &+& $\textbf{25}x_{2,3}$ &+& $\textbf{27}x_{2,4}$ &+& $\textbf{31}x_{2,5}$ &+& $\textbf{\ 4}x_{2,6}$\\
				$c_{3,7}=$ & $\textbf{22}x_{3,1}$ &+& $\textbf{12}x_{3,2}$ &+& $\textbf{27}x_{3,3}$ &+& $\textbf{31}x_{3,4}$ &+& $\textbf{31}x_{3,5}$ &+& $\textbf{23}x_{3,6}$\\
				$c_{4,7}=$ & $\textbf{17}x_{4,1}$ &+& $\textbf{\ 9}x_{4,2}$ &+& $\textbf{14}x_{4,3}$ &+& $\textbf{\ 4}x_{4,4}$ &+& $\textbf{21}x_{4,5}$ &+& $\textbf{25}x_{4,6}$\\
				$c_{5,7}=$ & $\textbf{20}x_{5,1}$ &+& $\textbf{\ 5}x_{5,2}$ &+& $\textbf{\ 5}x_{5,3}$ &+& $\textbf{13}x_{5,4}$ &+& $\textbf{11}x_{5,5}$ &+& $\textbf{16}x_{5,6}$\\
				$c_{6,7}=$ & $\textbf{25}x_{6,1}$ &+& $\textbf{16}x_{6,2}$ &+& $\textbf{30}x_{6,3}$ &+& $\textbf{28}x_{6,4}$ &+& $\textbf{10}x_{6,5}$ &+& $\textbf{24}x_{6,6}$\\
				$c_{7,7}=$ & $\textbf{20}x_{7,1}$ &+& $\textbf{\ 8}x_{7,2}$ &+& $\textbf{21}x_{7,3}$ &+& $\textbf{\ 9}x_{7,4}$ &+& $\textbf{\ 3}x_{7,5}$ &+& $\textbf{25}x_{7,6}$\\
				$c_{8,7}=$ & $\textbf{23}x_{8,1}$ &+& $\textbf{\ 4}x_{8,2}$ &+& $\textbf{12}x_{8,3}$ &+& $\textbf{16}x_{8,4}$ &+& $\textbf{\ 8}x_{8,5}$ &+& $\textbf{17}x_{8,6}$\\
				$c_{9,7}=$ & $\textbf{\ 2}x_{9,1}$ &+& $\textbf{21}x_{9,2}$ &+& $\textbf{\ 8}x_{9,3}$ &+& $\textbf{16}x_{9,4}$ &+& $\textbf{\ 7}x_{9,5}$ &+& $\textbf{25}x_{9,6}$\\
				%\hline
			\end{tabular} & \\
			\ 
			\\
			%% increase table row spacing, adjust to taste
			\renewcommand{\arraystretch}{0.9}
			\centering
			%% Some packages, such as MDW tools, offer better commands for making tables
			%% than the plain LaTeX2e tabular which is used here.
			\hspace{-1.0cm}\begin{tabular}{l@{}l@{}l@{}l@{}l@{}l@{}l@{}l@{}l@{}l@{}l@{}l@{}l@{}l@{}l@{}l}
				%\hline
				$c_{1,8}=$ & $\textbf{\ 8}x_{1,1}$ &+&$\textbf{24}x_{1,2}$ &+& $\textbf{21}x_{1,3}$ &+& $\textbf{19}x_{1,4}$ &+& $\textbf{\ 6}x_{1,5}$ &+& $\textbf{20}x_{1,6}$ &+& $\textbf{\ 8} {x_{4,1}}$ &+& $\textbf{\ 6}{x_{2,4}}$\\		
				$c_{2,8}=$ & $\textbf{\ 3}x_{2,1}$ &+& $\textbf{12}x_{2,2}$ &+& $\textbf{\ 6}x_{2,3}$ &+& $\textbf{\ 3}x_{2,4}$ &+& $\textbf{16}x_{2,5}$ &+& $\textbf{10}x_{2,6}$ &+& $\textbf{30}{x_{5,1}}$ &+& $\textbf{24}{x_{1,5}}$\\
				$c_{3,8}=$ & $\textbf{23}x_{3,1}$ &+& $\textbf{20}x_{3,2}$ &+& $\textbf{30}x_{3,3}$ &+& $\textbf{\ 7}x_{3,4}$ &+& $\textbf{16}x_{3,5}$ &+& $\textbf{10}x_{3,6}$ &+& $\textbf{21}{x_{6,1}}$ &+& $\textbf{27}{x_{1,6}}$\\
				$c_{4,8}=$ & $\textbf{14}x_{4,1}$ &+& $\textbf{\ 7}x_{4,2}$ &+& $\textbf{10}x_{4,3}$ &+& $\textbf{14}x_{4,4}$ &+& $\textbf{24}x_{4,5}$ &+& $\textbf{20}x_{4,6}$ &+& $\textbf{16}{x_{1,2}}$ &+& $\textbf{31}{x_{5,4}}$\\
				$c_{5,8}=$ & $\textbf{25}x_{5,1}$ &+& $\textbf{11}x_{5,2}$ &+& $\textbf{29}x_{5,3}$ &+& $\textbf{12}x_{5,4}$ &+& $\textbf{20}x_{5,5}$ &+& $\textbf{24}x_{5,6}$ &+& $\textbf{15}{x_{2,2}}$ &+& $\textbf{\ 6}{x_{4,5}}$\\
				$c_{6,8}=$ & $\textbf{17}x_{6,1}$ &+& $\textbf{27}x_{6,2}$ &+& $\textbf{\ 4}x_{6,3}$ &+& $\textbf{21}x_{6,4}$ &+& $\textbf{15}x_{6,5}$ &+& $\textbf{11}x_{6,6}$ &+& $\textbf{19}{x_{3,2}}$ &+& $\textbf{21}{x_{4,6}}$\\
				$c_{7,8}=$ & $\textbf{19}x_{7,1}$ &+& $\textbf{23}x_{7,2}$ &+& $\textbf{16}x_{7,3}$ &+& $\textbf{\ 4}x_{7,4}$ &+& $\textbf{14}x_{7,5}$ &+& $\textbf{16}x_{7,6}$ &+& $\textbf{\ 9}{x_{1,3}}$ &+& $\textbf{\ 8}{x_{8,4}}$\\
				$c_{8,8}=$ & $\textbf{\ 5}x_{8,1}$ &+& $\textbf{26}x_{8,2}$ &+& $\textbf{22}x_{8,3}$ &+& $\textbf{30}x_{8,4}$ &+& $\textbf{22}x_{8,5}$ &+& $\textbf{21}x_{8,6}$ &+& $\textbf{24}{x_{2,3}}$ &+& $\textbf{26}{x_{7,5}}$\\
				$c_{9,8}=$ & $\textbf{10}x_{9,1}$ &+& $\textbf{\ 8}x_{9,2}$ &+& $\textbf{10}x_{9,3}$ &+& $\textbf{27}x_{9,4}$ &+& $\textbf{28}x_{9,5}$ &+& $\textbf{20}x_{9,6}$ &+& $\textbf{16}{x_{3,3}}$ &+& $\textbf{\ 4}{x_{7,6}}$\\
				%\hline
			\end{tabular} & \\
			\ 
			\\
			\renewcommand{\arraystretch}{0.9}
			\centering
			\hspace{-1.0cm}\begin{tabular}{l@{}l@{}l@{}l@{}l@{}l@{}l@{}l@{}l@{}l@{}l@{}l@{}l@{}l@{}l@{}l}
				%\hline
				$c_{1,9}=$ & $\textbf{20}x_{1,1}$ &+& $\textbf{20}x_{1,2}$ &+& $\textbf{30}x_{1,3}$ &+& $\textbf{17}x_{1,4}$ &+& $\textbf{12}x_{1,5}$ &+& $\textbf{27}x_{1,6}$ &+& $\textbf{28}{x_{7,1}}$ &+& $\textbf{\ 9}{x_{3,4}}$\\
				$c_{2,9}=$ & $\textbf{18}x_{2,1}$ &+& $\textbf{10}x_{2,2}$ &+& $\textbf{20}x_{2,3}$ &+& $\textbf{21}x_{2,4}$ &+& $\textbf{13}x_{2,5}$ &+& $\textbf{\ 7}x_{2,6}$ &+& $\textbf{\ 2}{x_{8,1}}$ &+& $\textbf{\ 6}{x_{3,5}}$\\
				$c_{3,9}=$ & $\textbf{31}x_{3,1}$ &+& $\textbf{25}x_{3,2}$ &+& $\textbf{12}x_{3,3}$ &+& $\textbf{18}x_{3,4}$ &+& $\textbf{15}x_{3,5}$ &+& $\textbf{24}x_{3,6}$ &+& $\textbf{31}{x_{9,1}}$ &+& $\textbf{28}{x_{2,6}}$\\				
				$c_{4,9}=$ & $\textbf{\ 6}x_{4,1}$ &+& $\textbf{16}x_{4,2}$ &+& $\textbf{26}x_{4,3}$ &+& $\textbf{\ 4}x_{4,4}$ &+& $\textbf{21}x_{4,5}$ &+& $\textbf{27}x_{4,6}$ &+& $\textbf{26}{x_{7,2}}$ &+& $\textbf{\ 8}{x_{6,4}}$\\
				$c_{5,9}=$ & $\textbf{\ 7}x_{5,1}$ &+& $\textbf{\ 6}x_{5,2}$ &+& $\textbf{26}x_{5,3}$ &+& $\textbf{\ 6}x_{5,4}$ &+& $\textbf{15}x_{5,5}$ &+& $\textbf{16}x_{5,6}$ &+& $\textbf{28}{x_{8,2}}$ &+& $\textbf{\ 4}{x_{6,5}}$\\
				$c_{6,9}=$ & $\textbf{20}x_{6,1}$ &+& $\textbf{20}x_{6,2}$ &+& $\textbf{12}x_{6,3}$ &+& $\textbf{20}x_{6,4}$ &+& $\textbf{18}x_{6,5}$ &+& $\textbf{26}x_{6,6}$ &+& $\textbf{19}{x_{9,2}}$ &+& $\textbf{30}{x_{5,6}}$\\
				$c_{7,9}=$ & $\textbf{26}x_{7,1}$ &+& $\textbf{\ 2}x_{7,2}$ &+& $\textbf{\ 6}x_{7,3}$ &+& $\textbf{20}x_{7,4}$ &+& $\textbf{17}x_{7,5}$ &+& $\textbf{23}x_{7,6}$ &+& $\textbf{\ 8} {x_{4,3}}$ &+& $\textbf{31}{x_{9,4}}$\\
				$c_{8,9}=$ & $\textbf{20}x_{8,1}$ &+& $\textbf{15}x_{8,2}$ &+& $\textbf{13}x_{8,3}$ &+& $\textbf{20}x_{8,4}$ &+& $\textbf{10}x_{8,5}$ &+& $\textbf{24}x_{8,6}$ &+& $\textbf{31}{x_{5,3}}$ &+& $\textbf{\ 9}{x_{9,5}}$\\
				$c_{9,9}=$ & $\textbf{\ 6}x_{9,1}$ &+& $\textbf{\ 2}x_{9,2}$ &+& $\textbf{31}x_{9,3}$ &+& $\textbf{12}x_{9,4}$ &+& $\textbf{16}x_{9,5}$ &+& $\textbf{30}x_{9,6}$ &+& $\textbf{20}{x_{6,3}}$ &+& $\textbf{13}{x_{8,6}}$\\
				%\hline
			\end{tabular} & \\ \ \\
		\end{tabular}
	}
\end{example}

\begin{figure}
	\begin{minipage}[b]{0.5\linewidth}
		\centering
		\includegraphics[width=8.5cm,height=4cm]{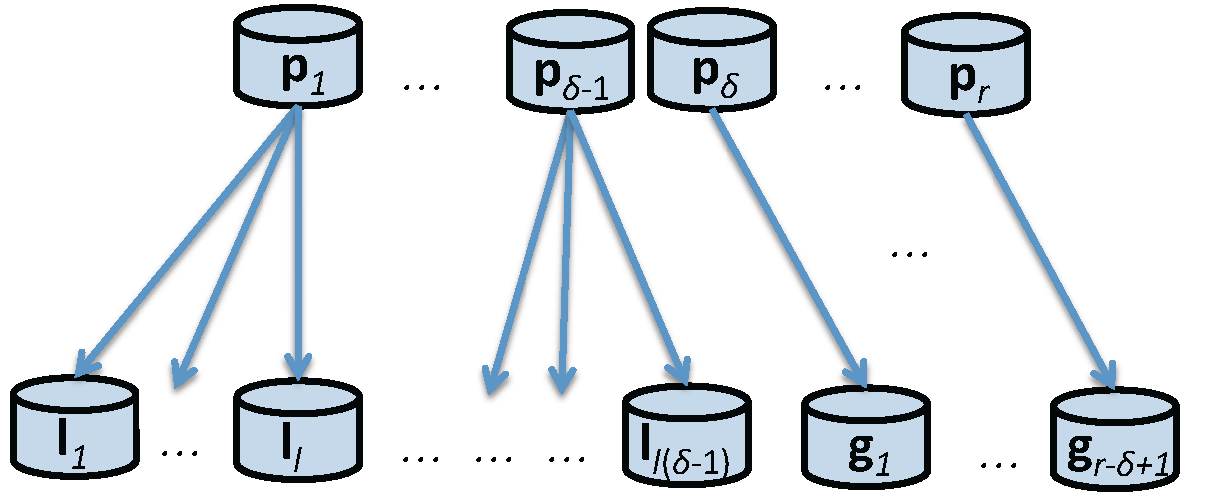}
	\end{minipage}
	\caption{There are $r$ parity nodes from a systematic $(n, k)$ MDS code with a sub-packetization level $\alpha$. The parity splitting technique generates from every parity node $\mathbf{p}_1,\ldots, \mathbf{p}_{\delta - 1}$ $l$ local parity nodes and renames the parity nodes $\mathbf{p}_{\delta},\ldots, \mathbf{p}_{r}$ as global parity nodes $\mathbf{g}_{1},\ldots, \mathbf{g}_{r-\delta+1}$.}
	\vspace{-0.3cm}
	\label{framework}
\end{figure}

We adapt the parity-splitting code construction for designing codes with local regeneration described in \cite{6846301} for the specifics of HashTag codes. The construction is described in Construction \ref{HashTagLRCConstruction}. For simplifying the description, we take some of the parameters to have specific relations, although it is possible to define a similar construction with general values of the parameters. Namely, we take that $r | k$ and $r | \alpha$. We also take that the parameters for the information locality $(l,\delta)$ are such that $l | k$ and $\delta \leq r$.
\begin{construction}\label{HashTagLRCConstruction}
	 An input to the construction is a $(n, k)$ HashTag MDS code with a sub-packetization level $\alpha$. This input comes with the associated linear parity equations (\ref{LinEquations}), i.e. with the associated systematic generator matrix $G$. The MDS code can be, but it does not necessarily have to be a MSR code. Another input is the information locality $(l,\delta)$ that the constructed code with local regeneration will have. The following steps are performed:
	 \begin{description}
	 	\item[Step 1.]~Split $k$ systematic nodes into $l$ disjunctive subsets $S_i, i\in [l]$, where every set has $\frac{k}{l}$ nodes. While this splitting can be arbitrary, take the canonical splitting where $S_1 = \{1,\ldots, \frac{k}{l}\}$, $S_2 = \{\frac{k}{l}+1,\ldots, \frac{2 k}{l}\}$, $\ldots$, $S_{l} = \{\frac{(l-1)k}{l}+1,\ldots, k\}$.  
	 	\item[Step 2.]~Split each of the $\alpha$ linear equations for the first $\delta - 1$ parity expressions (\ref{LinEquations}) into $l$ sub-summands where the variables in each equation correspond to the elements from the disjunctive subsets.
		\item[Step 3.]~Associate the obtained $\alpha \times l \times (\delta - 1)$ sub-summands to $l \times (\delta - 1)$ new local parity nodes.
		\item[Step 4.]~Rename the remaining $r - \delta + 1$ parity nodes that were not split in Step 1 - Step 3 as new global parity nodes.
		\item[Step 5.]~Obtain a new systematic generator matrix $G'$ from the local and global parity nodes.
		\item[Step 6.]~Return $G'$ as a generator matrix of a $[n,K=k\alpha,d_{\min},\alpha,k]$  vector code with information locality $(l,\delta)$.
	 \end{description}
		
\end{construction}

A graphical presentation of the parity-splitting procedure is given in Fig. \ref{framework}.

\begin{theorem}\label{Thm:LocalMSRHashTag}
	If the used $(n, k)$ MDS HashTag code in Construction \ref{HashTagLRCConstruction} is MSR, then the obtained $[n,K=k\alpha,d_{\min},\alpha,k]$ code with information locality $(l,\delta)$ is a MSR-Local code, where  
	\begin{equation}
	d_{\min} = n-k+1-\left( \frac{k}{l} - 1 \right)(\delta - 1) .
	\label{MinDistanceHashTagLocal}
	\end{equation}
\end{theorem}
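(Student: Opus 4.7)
The approach splits into two claims: (i) the code produced by Construction~\ref{HashTagLRCConstruction} has minimum distance equal to the right-hand side of (\ref{MinDistanceHashTagLocal}), and (ii) each local code is an MSR code. The upper bound $d_{\min}\leq n-k+1-(k/l-1)(\delta-1)$ is inherited from the general bound (\ref{distance}) once I verify that the construction indeed yields $(l,\delta)$ information locality in the sense of Definition~\ref{KamathInformationLocality}. This verification is a direct reading of Steps 1--4: each local support consists of the group's systematic symbols together with the $\delta-1$ local parities obtained by restricting the first $\delta-1$ HashTag parity equations (\ref{LinEquations}) to the variables of that group, so $|S_i|=l+\delta-1$; the associated punctured code has minimum distance at least $\delta$ because its restricted parity rows come from rows of an MDS generator and remain independent; and the union of these supports covers all $k$ systematic thick columns, giving $G'$ restricted to $\bigcup_i S_i$ rank $K$.

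The main substantive step is the matching lower bound $d_{\min}\geq n-k+1-(k/l-1)(\delta-1)$. I plan to adapt the two-phase Pyramid-code erasure-decoding argument of Huang et al.\ and Kamath et al.\ to the HashTag setting. Given any erasure pattern of weight at most $n-k-(k/l-1)(\delta-1)$, phase one decodes every local group that has strictly fewer than $\delta$ erased symbols by using only that group's $\delta-1$ local parities. Phase two re-assembles each of the original HashTag parity vectors $\mathbf{p}_1,\ldots,\mathbf{p}_{\delta-1}$ by summing, across all groups, the now-available local parities of the same type; the re-assembled parities, together with the surviving $r-\delta+1$ global parities and the surviving systematic thick columns, form at least $k$ independent thick columns of the input HashTag generator, so the MDS property of the input code completes the decoding. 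A weight-budget count, identical to the Pyramid-code counting, shows that the two phases are always jointly sufficient, giving the matching lower bound.

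For (ii), I observe that restricting the first $\delta-1$ HashTag parity equations (\ref{LinEquations}) to the variables of a single group amounts to deleting the summands whose indices lie outside the group; the scheduling index arrays $\mathbf{P}_1,\ldots,\mathbf{P}_{\delta-1}$ of Definition~\ref{HashTagCodes}, with columns restricted to that group, yield a valid HashTag code with parameters $(l+\delta-1,l)$ and the same sub-packetization $\alpha$. Because the input code is MSR and attains (\ref{optimalMSR}), the restricted equations attain (\ref{optimalMSR}) for the smaller local parameters, so each local code is MSR. The main obstacle is the bookkeeping in the lower-bound half of (i) --- tracking which thick columns remain independent after phase one and verifying that re-assembly does not double-count erasures; the MSR inheritance in (ii) is immediate once parity-splitting is seen to commute with the HashTag scheduling.
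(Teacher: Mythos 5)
The paper's own ``proof'' is a single-sentence sketch: it notes $K = m l \alpha$ and then appeals to ``a technical adaptation of the proof of Theorem~5.5'' of Kamath et~al.\ for pyramid-like MSR-Local codes built by parity splitting. Your reconstruction goes considerably further and makes the two-phase Pyramid-type erasure argument explicit, which is indeed what Kamath et al.'s Theorem~5.5 does; in that sense you are following the same route at a much higher level of detail than the paper, and the decomposition into (upper bound on $d_{\min}$ from the locality Singleton bound) + (matching lower bound via two-phase decoding) + (local codes are MSR) is the right skeleton.

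There are two things to flag. First, a bookkeeping slip inherited from the paper's notational overload of $l$: in Construction~\ref{HashTagLRCConstruction}, $l$ is the \emph{number} of disjoint groups, each of size $k/l$, so a local support has $|S_i| = k/l + (\delta-1)$, not $l + \delta - 1$, and the local code has parameters $\bigl(k/l + \delta - 1,\, k/l\bigr)$, not $(l+\delta-1,\,l)$. (In Definition~\ref{KamathInformationLocality} the symbol $l$ plays the role of $k/l$, which is exactly where the paper itself trips; Example~\ref{Ex:HashTag0906alfa09-Split} with $(l=2,\delta=2)$ has local supports of size $4 = k/l + \delta - 1$, not $3 = l+\delta-1$.) This does not change the shape of the argument but does change the arithmetic in the weight budget, and you should check your lower-bound count against the construction's actual group sizes.

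Second, and more substantively, the final paragraph of your proposal treats the claim that each local code is MSR as ``immediate once parity-splitting is seen to commute with the HashTag scheduling.'' That is precisely the step that needs proof, and it is not immediate. For $\delta-1 = 1$ local parity the claim is trivial, since any single-parity MDS code is automatically MSR. But for $\delta - 1 \geq 2$ the HashTag sub-packetization $\alpha = r^{\lceil k/r\rceil}$ is tuned to the \emph{global} number of parities $r$, not to the local $\delta - 1$, and the $(?,?)$ entries in the index arrays $\mathbf{P}_2,\ldots,\mathbf{P}_{\delta-1}$ of Definition~\ref{HashTagCodes} are scheduled to optimize repair for the full $(n,k)$ code; after restriction to a single group those extra summands may reference variables from outside the group, so the restricted equations are not automatically the parity equations of a valid smaller HashTag MSR code. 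Kamath et al.\ handle the analogous step via the uniform rank-accumulation property of the parent MSR code; you either need to invoke that machinery or give a direct argument that the restricted read schedule attains the bound (\ref{optimalMSR}) for the local parameters. As it stands, that step is a genuine gap, although to be fair the paper's sketch punts on exactly the same point.
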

\begin{proof}\it{(Sketch)} 
	{\rm Since in Construction \ref{HashTagLRCConstruction} we took that $r | k$ and $r | \alpha$, it means that the scalar dimension of the code is $K=m l \alpha$ for some integer $m$. Then the proof continues basically as a technical adaptation of the proof of Theorem 5.5 that Kamath et al. gave for the pyramid-like MSR-Local codes constructed with the parity-splitting strategy in \cite{6846301}.}
\end{proof}
Note that if $\alpha < r^{\frac{k}{r}}$, then HashTag codes are sub-optimal in terms of the repair bandwidth. Consequently, the produced codes with Construction 1 are locally repairable, but they are not MSR-Local codes.
\begin{example}\label{Ex:HashTag0906alfa09-Split}
Let us split the MSR code given in Example \ref{Ex:HashTag0906alfa09} into a code with local regeneration and with information locality $(l=2,\delta=2)$. In Step 1 we split 6 systematic nodes $\{\mathbf{c}_{1},\ldots, \mathbf{c}_{6} \}$ into $l=2$ disjunctive subsets $S_1 = \{\mathbf{c}_{1}, \mathbf{c}_{2}, \mathbf{c}_{3} \}$ and $S_2 = \{\mathbf{c}_{4}, \mathbf{c}_{5}, \mathbf{c}_{6} \}$. According to Step 2 of Construction 1, the first global parity $\mathbf{c}_{7}$ in Example \ref{Ex:HashTag0906alfa09} is split into two local parities $\mathbf{l}_{1}=(l_{1,1}, \ldots,l_{9,1})^T$ and $\mathbf{l}_{2}=(l_{1,2},\ldots,l_{9,2})^T$ as follows:
\\
\\
	{\small
		\begin{tabular}{llll}
			%% increase table row spacing, adjust to taste
			%\renewcommand{\arraystretch}{0.9}
			%\centering
			%% Some packages, such as MDW tools, offer better commands for making tables
			%% than the plain LaTeX2e tabular which is used here.
			\hspace{-0.5cm}
			\begin{tabular}{l@{}l@{}l@{}l@{}l@{}l@{}l@{}l@{}l@{}l@{}l@{}l}
				%\hline
				$l_{1,1}=$ & $\textbf{\ 7}x_{1,1}$ &+&$\textbf{10}x_{1,2}$ &+& $\textbf{18}x_{1,3}$\\
				$l_{2,1}=$ & $\textbf{26}x_{2,1}$ &+& $\textbf{17}x_{2,2}$ &+& $\textbf{25}x_{2,3}$\\
				$l_{3,1}=$ & $\textbf{22}x_{3,1}$ &+& $\textbf{12}x_{3,2}$ &+& $\textbf{27}x_{3,3}$\\
				$l_{4,1}=$ & $\textbf{17}x_{4,1}$ &+& $\textbf{\ 9}x_{4,2}$ &+& $\textbf{14}x_{4,3}$\\
				$l_{5,1}=$ & $\textbf{20}x_{5,1}$ &+& $\textbf{\ 5}x_{5,2}$ &+& $\textbf{\ 5}x_{5,3}$\\
				$l_{6,1}=$ & $\textbf{25}x_{6,1}$ &+& $\textbf{16}x_{6,2}$ &+& $\textbf{30}x_{6,3}$\\
				$l_{7,1}=$ & $\textbf{20}x_{7,1}$ &+& $\textbf{\ 8}x_{7,2}$ &+& $\textbf{21}x_{7,3}$\\
				$l_{8,1}=$ & $\textbf{23}x_{8,1}$ &+& $\textbf{\ 4}x_{8,2}$ &+& $\textbf{12}x_{8,3}$\\
				$l_{9,1}=$ & $\textbf{\ 2}x_{9,1}$ &+& $\textbf{21}x_{9,2}$ &+& $\textbf{\ 8}x_{9,3}$\\
				%\hline
			\end{tabular} & \ \ & 
			\begin{tabular}{l@{}l@{}l@{}l@{}l@{}l@{}l@{}l@{}l@{}l@{}l@{}l}
				%\hline
				$l_{1,2}=$ & $\textbf{11}x_{1,4}$ &+& $\textbf{17}x_{1,5}$ &+& $\textbf{\ 6}x_{1,6}$\\
				$l_{2,2}=$ & $\textbf{27}x_{2,4}$ &+& $\textbf{31}x_{2,5}$ &+& $\textbf{\ 4}x_{2,6}$\\
				$l_{3,2}=$ & $\textbf{31}x_{3,4}$ &+& $\textbf{31}x_{3,5}$ &+& $\textbf{23}x_{3,6}$\\
				$l_{4,2}=$ & $\textbf{\ 4}x_{4,4}$ &+& $\textbf{21}x_{4,5}$ &+& $\textbf{25}x_{4,6}$\\
				$l_{5,2}=$ & $\textbf{13}x_{5,4}$ &+& $\textbf{11}x_{5,5}$ &+& $\textbf{16}x_{5,6}$\\
				$l_{6,2}=$ & $\textbf{28}x_{6,4}$ &+& $\textbf{10}x_{6,5}$ &+& $\textbf{24}x_{6,6}$\\
				$l_{7,2}=$ & $\textbf{\ 9}x_{7,4}$ &+& $\textbf{\ 3}x_{7,5}$ &+& $\textbf{25}x_{7,6}$\\
				$l_{8,2}=$ & $\textbf{16}x_{8,4}$ &+& $\textbf{\ 8}x_{8,5}$ &+& $\textbf{17}x_{8,6}$\\
				$l_{9,2}=$ & $\textbf{16}x_{9,4}$ &+& $\textbf{\ 7}x_{9,5}$ &+& $\textbf{25}x_{9,6}$\\
				%\hline
			\end{tabular}
		\end{tabular}		
	}
\\
%\vspace{0.2cm}
	The remaining two global parities are kept as they are given in Example \ref{Ex:HashTag0906alfa09}, they are only renamed as $\mathbf{g}_{1}=(c_{1,8}, c_{2,8},\ldots,c_{9,8})^T$ and $\mathbf{g}_{2}=(c_{1,9}, c_{2,9},\ldots,c_{9,9})^T$. The overall code is a $(10, 6)$ code or with the terminology from \cite{conf/usenix/HuangSXOCG0Y12} it is a $(6, 2, 2)$ code. $\blacksquare$
\vspace{0.2cm}	
\end{example}

\begin{example}\label{Ex:HashTag0906alfa09-Split3}
	Let us split the same MSR code now with parameters $(l=3,\delta=2)$. In Step 1 we split 6 systematic nodes $\{\mathbf{c}_{1},\ldots, \mathbf{c}_{6} \}$ into $l=3$ disjunctive subsets $S_1 = \{\mathbf{c}_{1}, \mathbf{c}_{2}\}$, $S_2 = \{\mathbf{c}_{3}, \mathbf{c}_{4}\}$ and $S_3 = \{\mathbf{c}_{5}, \mathbf{c}_{6}\}$. In Step 2 of Construction 1, the first global parity $\mathbf{c}_{7}$ is split into three local parities: $\mathbf{l}_{1}=(l_{1,1}, \ldots,l_{9,1})^T$,  $\mathbf{l}_{2}=(l_{1,2},\ldots,l_{9,2})^T$ and 
	$\mathbf{l}_{3}=(l_{1,3},\ldots,l_{9,3})^T$	as follows:
	\\
	\\
	{\small
	\begin{tabular}{l@{\ }l@{}l@{\ }l@{}l}
		%% increase table row spacing, adjust to taste
		%\renewcommand{\arraystretch}{0.9}
		%\centering
		%% Some packages, such as MDW tools, offer better commands for making tables
		%% than the plain LaTeX2e tabular which is used here.
		\hspace{-0.5cm}
		\begin{tabular}{l@{}l@{}l@{}l@{}l@{}l@{}l@{}l@{}l@{}l@{}l@{}l}
			%\hline
			$l_{1,1}=$ & $\textbf{\ 7}x_{1,1}$ &+&$\textbf{10}x_{1,2}$    \\
			$l_{2,1}=$ & $\textbf{26}x_{2,1}$  &+& $\textbf{17}x_{2,2}$   \\
			$l_{3,1}=$ & $\textbf{22}x_{3,1}$  &+& $\textbf{12}x_{3,2}$   \\
			$l_{4,1}=$ & $\textbf{17}x_{4,1}$  &+& $\textbf{\ 9}x_{4,2}$  \\
			$l_{5,1}=$ & $\textbf{20}x_{5,1}$  &+& $\textbf{\ 5}x_{5,2}$  \\
			$l_{6,1}=$ & $\textbf{25}x_{6,1}$  &+& $\textbf{16}x_{6,2}$   \\
			$l_{7,1}=$ & $\textbf{20}x_{7,1}$  &+& $\textbf{\ 8}x_{7,2}$  \\
			$l_{8,1}=$ & $\textbf{23}x_{8,1}$  &+& $\textbf{\ 4}x_{8,2}$  \\
			$l_{9,1}=$ & $\textbf{\ 2}x_{9,1}$ &+& $\textbf{21}x_{9,2}$   \\
			%\hline
		\end{tabular} && 
		\begin{tabular}{l@{}l@{}l@{}l@{}l@{}l@{}l@{}l@{}l@{}l@{}l@{}l}
			%\hline
			$l_{1,2}=$ & $\textbf{18}x_{1,3}$  &+& $\textbf{11}x_{1,4}$ \\
			$l_{2,2}=$ & $\textbf{25}x_{2,3}$  &+& $\textbf{27}x_{2,4}$ \\
			$l_{3,2}=$ & $\textbf{27}x_{3,3}$  &+& $\textbf{31}x_{3,4}$ \\
			$l_{4,2}=$ & $\textbf{14}x_{4,3}$  &+& $\textbf{\ 4}x_{4,4}$\\
			$l_{5,2}=$ & $\textbf{\ 5}x_{5,3}$ &+& $\textbf{13}x_{5,4}$ \\
			$l_{6,2}=$ & $\textbf{30}x_{6,3}$  &+& $\textbf{28}x_{6,4}$ \\
			$l_{7,2}=$ & $\textbf{21}x_{7,3}$  &+& $\textbf{\ 9}x_{7,4}$\\
			$l_{8,2}=$ & $\textbf{12}x_{8,3}$  &+& $\textbf{16}x_{8,4}$ \\
			$l_{9,2}=$ & $\textbf{\ 8}x_{9,3}$ &+& $\textbf{16}x_{9,4}$ \\
			%\hline
		\end{tabular} &&
		\begin{tabular}{l@{}l@{}l@{}l@{}l@{}l@{}l@{}l@{}l@{}l@{}l@{}l}
			%\hline
			$l_{1,3}=$ & $\textbf{17}x_{1,5}$  &+& $\textbf{\ 6}x_{1,6}$\\
			$l_{2,3}=$ & $\textbf{31}x_{2,5}$  &+& $\textbf{\ 4}x_{2,6}$\\
			$l_{3,3}=$ & $\textbf{31}x_{3,5}$  &+& $\textbf{23}x_{3,6}$\\
			$l_{4,3}=$ & $\textbf{21}x_{4,5}$  &+& $\textbf{25}x_{4,6}$\\
			$l_{5,3}=$ & $\textbf{11}x_{5,5}$  &+& $\textbf{16}x_{5,6}$\\
			$l_{6,3}=$ & $\textbf{10}x_{6,5}$  &+& $\textbf{24}x_{6,6}$\\
			$l_{7,3}=$ & $\textbf{\ 3}x_{7,5}$ &+& $\textbf{25}x_{7,6}$\\
			$l_{8,3}=$ & $\textbf{\ 8}x_{8,5}$ &+& $\textbf{17}x_{8,6}$\\
			$l_{9,3}=$ & $\textbf{\ 7}x_{9,5}$ &+& $\textbf{25}x_{9,6}$\\
			%\hline
		\end{tabular}
	\end{tabular}		
	}
\\
	%\vspace{0.2cm}
	The remaining two global parities are kept as they are given in Example \ref{Ex:HashTag0906alfa09}, but they are just renamed as $\mathbf{g}_{1}=(c_{1,8}, c_{2,8},\ldots,c_{9,8})^T$ and $\mathbf{g}_{2}=(c_{1,9}, c_{2,9},\ldots,c_{9,9})^T$. The overall code is a $(11, 6)$ code or with the terminology from \cite{conf/usenix/HuangSXOCG0Y12} it is a $(6, 3, 2)$ code. $\blacksquare$
	\vspace{0.2cm}	
\end{example}

\vspace{-0.2cm}
There are two interesting aspects of Theorem \ref{Thm:LocalMSRHashTag} that should be emphasized: {\bf 1.} We give an explicit construction of an MSR-Local code (note that in \cite{6846301} the construction is existential), and {\bf 2.} Examples \ref{Ex:HashTag0906alfa09-Split} and \ref{Ex:HashTag0906alfa09-Split3} show that the size of the finite field can be slightly lower than the size proposed in \cite{6846301}. Namely, the MSR HashTag code used in our example is defined over $\mathbf{F}_{32}$, while the lower bound in \cite{6846301} suggests the field size to be bigger than $\binom{9}{6} = 84$. We consider this as a minor contribution and an indication that a deeper theoretical analysis can further lower the field size bound given in \cite{6846301}.

\vspace{-0.05cm}
\section{Repair Duality}\label{repair}
\begin{theorem}\label{Thm:DoubleNature}
		Let $\mathcal{C}$ be a $(n, k)$ MSR HashTag code with $\gamma_{MSR}^{min}=\frac{M}{k}\frac{n-1}{n-k}$. Further, let $\mathcal{C}'$ be a $[n,K=k\alpha,d_{\min},\alpha,k]$ code with local regeneration and with information locality $(l,\delta)$ obtained by Construction \ref{HashTagLRCConstruction}. If we denote with $\gamma_{Local}^{min}$ the minimum repair bandwidth for single systematic node repair with $\mathcal{C}'$, then  
		\vspace{-0.25cm}
		\begin{equation}
		\gamma_{Local}^{min}=\min(\frac{M}{k}\frac{\frac{k}{l}+\delta - 2}{\delta - 1}, \frac{M}{k}\frac{n-1}{n-k}).
		\label{optimalBW}
		\end{equation}
\end{theorem}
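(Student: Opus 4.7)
The plan is to exhibit two explicit repair strategies for $\mathcal{C}'$ and show that each one matches one of the two candidates inside the $\min(\cdot,\cdot)$; the claim then follows by optimality of the MSR bound of Lemma 1 applied in each setting.

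First, I would spell out the \emph{purely local} strategy. To repair a systematic node $\mathbf{c}_i$ that lies in some group $S_m$, contact only the remaining $\frac{k}{l}-1$ systematic nodes of $S_m$ together with the $\delta-1$ local parity nodes associated with $S_m$. The key observation is that the subcode supported on $S_m$ and its $\delta-1$ local parities is itself an MSR code with parameters $(n_{\text{loc}},k_{\text{loc}})=(\tfrac{k}{l}+\delta-1,\tfrac{k}{l})$ and the same sub-packetization $\alpha=M/k$: by Step 2 of Construction \ref{HashTagLRCConstruction}, each local parity is obtained by restricting one of the first $\delta-1$ global parity equations (\ref{LinEquations}) of the parent HashTag MSR code to the indices in $S_m$, so the helper-combinations used by the HashTag repair of $\mathcal{C}$ restrict row-wise to valid helper maps on $S_m$. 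Applying (\ref{optimalMSR}) to this subcode with $d_{\text{loc}}=\tfrac{k}{l}+\delta-2$ immediately yields
$$\gamma_{\text{loc}}=\frac{M}{k}\cdot\frac{\tfrac{k}{l}+\delta-2}{\delta-1}.$$

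Second, I would describe the \emph{global} strategy, which shows that the original HashTag MSR repair of $\mathcal{C}$ can be simulated on $\mathcal{C}'$. The $k$ systematic nodes are unchanged, each split parity $\mathbf{p}_t$ ($t\in[\delta-1]$) is the coordinate-wise sum of its $l$ local shares $\mathbf{l}_{1,t},\ldots,\mathbf{l}_{l,t}$ (because the splitting in Step 2 is additive over disjoint supports), and the $r-\delta+1$ global parities are literally those of $\mathcal{C}$ up to renaming. Hence every helper message the HashTag repair of $\mathcal{C}$ would have requested from $\mathbf{p}_t$ can be assembled from the $l$ corresponding local shares without increasing the total transferred data, so this strategy achieves
$$\gamma_{\text{glb}}=\frac{M}{k}\cdot\frac{n-1}{n-k}.$$

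Combining the two, $\gamma_{Local}^{min}\le\min(\gamma_{\text{loc}},\gamma_{\text{glb}})$. For the matching lower bound, I would invoke Lemma 1 twice: any repair scheme that uses only nodes inside a local group is a repair scheme for the $(\tfrac{k}{l}+\delta-1,\tfrac{k}{l})$ MDS local code and is bounded below by $\gamma_{\text{loc}}$, while any repair scheme that uses the full node set is bounded below by $\gamma_{\text{glb}}$; a general scheme decomposes into these two regimes, giving the $\min$.

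The main obstacle is the MSR (not merely MDS) claim for the local subcode. Theorem \ref{Thm:LocalMSRHashTag} already gives $d_{\min}(\mathcal{C}_i)\ge\delta$, so MDS-ness is known; what is extra is that the helper maps used in the HashTag MSR repair of $\mathcal{C}$ act block-diagonally across the disjoint index sets $S_1,\ldots,S_l$ in the rows affected by the first $\delta-1$ parities. For $\delta=2$ this is trivial because a single local parity forces reading every surviving local symbol and $\gamma_{\text{loc}}=\alpha\cdot k/l$ is achieved automatically. For $\delta>2$ one has to verify, using the row-wise structure of the HashTag parity equations (\ref{LinEquations}) and the scheduling algorithm of \cite{7463553,journals/corr/KralevskaGJO16}, that the restriction to $S_m$ remains a valid MSR helper assignment; this is the technical step I would develop in full detail, in the same spirit as Theorem 5.5 of \cite{6846301}.
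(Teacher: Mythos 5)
Your proposal takes essentially the same route as the paper: exhibit the purely local repair (treat the local group together with its $\delta-1$ local parities as a $(\frac{k}{l}+\delta-1,\frac{k}{l})$ MSR code with file size $\frac{M}{l}$) and the global repair (reassemble each split parity $\mathbf{p}_t$ as the sum of its $l$ local shares, then run the parent HashTag MSR repair), apply the MSR bandwidth formula to each, and take the minimum. The paper's argument is exactly this, only stated more tersely and without the lower-bound discussion.

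You are also right to flag as the nontrivial point whether the local subcode is genuinely \emph{MSR} rather than merely MDS, i.e.\ whether the HashTag helper assignments restrict row-wise to the index set $S_m$ in a way consistent with the scheduling algorithm when $\delta>2$. The paper leans on Theorem~\ref{Thm:LocalMSRHashTag} for this, whose own proof is only a sketch deferring to Theorem~5.5 of Kamath et al.; your observation that $\delta=2$ is trivial (a single local parity forces a full download of $\frac{k}{l}$ nodes, matching $\frac{M}{k}\cdot\frac{k}{l}$) and that $\delta>2$ needs the block-diagonal helper structure is a correct and useful sharpening of what the paper leaves implicit.
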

\vspace{-0.2cm}
\begin{proof}
	When repairing one systematic node, we can always treat local nodes as virtual global nodes from which they have been constructed by splitting. Then with the use of other global nodes we have a situation of repairing one systematic node in the original MSR code for which the repair bandwidth is $\frac{M}{k}\frac{n-1}{n-k}$. On the other hand, if we use the MSR-Local code, then we have the following situation. There are $\frac{k}{l}$ systematic nodes in the MSR-Local code, and the total length of the MSR-Local code is $\frac{k}{l} + \delta - 1$. The file size for the MSR-Local code is decreased by a factor $l$, i.e. it is $\frac{M}{l}$. If we apply the MSR repair bandwidth for these values we get:
	$$\frac{\frac{M}{l}}{\frac{k}{l}} \cdot \frac{\frac{k}{l}+(\delta - 1)-1}{\delta - 1} =  \frac{M}{k} \frac{\frac{k}{l}+\delta - 2}{\delta - 1}.$$
\end{proof}
		
%\textbf{treba da se smeni redosled i isto moze treba da dademe bounds za traffic i locality za R-LRC}

Theorem \ref{Thm:DoubleNature} is one of the main contributions of this work: It emphasizes the \emph{repair duality} for repairing one systematic node: by the local and global parity nodes or only by the local parity nodes. We want to emphasize the practical importance of Theorem \ref{Thm:DoubleNature}. Namely, in practical implementations regardless of the theoretical value of $\gamma_{Local}^{min}$, the number of I/O operations and the access time for the contacted parts can be either crucial or insignificant. In those cases an intelligent repair strategy implemented in the distributed storage system can decide which repair procedure should be used: the one with global parity nodes or the one with the local parity nodes. We illustrate this by the following example.

\begin{example}\label{Ex:Duality}
	Let us consider the $(9, 6)$ MSR HashTag code given in Example \ref{Ex:HashTag0906alfa09} and its corresponding local variant from Construction \ref{HashTagLRCConstruction} with information locality $(l=2,\delta=2)$ given in Example \ref{Ex:HashTag0906alfa09-Split}. That means that the code with local regeneration has 6 systematic nodes, 2 local and 2 global parity nodes. 
	
	Let us analyze the number of reads when we recover one unavailable systematic node. If we recover with the local nodes, then we have to perform 3 sequential reads, reading the whole data in a contiguous manner from 3 nodes. If we repair the unavailable data with the help of both local and global parity nodes, it reduces to the case of recovery with a MSR code, where the number of sequential reads is between 8 and 24 (average 16 reads) but the amount of transferred data is equivalent to 2.67 nodes. 
	
	More concretely, let us assume that we want to recover the node $\mathbf{x}_1 = (x_{1,1}, x_{2,1},\ldots,x_{9,1})^T$. 
	\begin{enumerate}
		\item For a recovery only with the local parity $\mathbf{l}_{1}$, 3 sequential reads of $\mathbf{l}_{1}$, $\mathbf{x}_{2}$ and $\mathbf{x}_{3}$ are performed.
		\item For a recovery with the local and global parities:
		\begin{enumerate}
			\item First, read $l_{1,1}$,  $l_{2,1}$ and $l_{3,1}$ from $\mathbf{l}_{1}$, and $x_{1,2}$,  $x_{2,2}$ and $x_{3,2}$ from $\mathbf{x}_{2}$ and $x_{1,3}$, $x_{2,3}$ and $x_{3,3}$ from $\mathbf{x}_{3}$ to recover $x_{1,1}$,  $x_{2,1}$ and $x_{3,1}$.
			\item Additionally, read $x_{1,4}$,  $x_{2,4}$ and $x_{3,4}$ from $\mathbf{x}_{4}$ and $x_{1,5}$,  $x_{2,5}$ and $x_{3,5}$ from $\mathbf{x}_{5}$ and $x_{1,6}$, $x_{2,6}$ and $x_{3,6}$ from $\mathbf{x}_{6}$.
			\item Then, read $c_{1,8}$,  $c_{2,8}$ and $c_{3,8}$ from the global parity $\mathbf{g}_{1}$ to recover $x_{4,1}$,  $x_{5,1}$ and $x_{6,1}$.
			\item Finally, read $c_{1,9}$,  $c_{2,9}$ and $c_{3,9}$ from the global parity $\mathbf{g}_{2}$ to recover $x_{7,1}$,  $x_{8,1}$ and $x_{9,1}$.
		\end{enumerate}
	\end{enumerate}

	Now, let a small file of 54 KB be stored across 6 systematic, 2 local and 2 global parity nodes. The sub-packetization level is $\alpha=9$, thus every node stores 9 KB, sub-packetized in 9 parts, each of size 1 KB. If the access time for starting a read operation is approximately the same as transferring 9 KB, then repairing with local and global parity nodes is more expensive since we have to perform in average 12 reads, although the amount of transferred data is equivalent to 2.67 nodes.
	
	On the other hand, let us have a big file of 540 MB stored across 6 systematic nodes and 2 local and 2 global parity nodes. The sub-packetization level is again $\alpha=9$, thus every node stores 90 MB, sub-packetized in 9 parts, each of size 10 MB. The access time for starting a read operation is again approximately the same as transferring 9 KB, which is insignificant in comparison with the total amount of transferred data in the process of repairing of a node. In this case, it is better to repair a failed node with local and global parity nodes since it requires a transfer of 240 MB versus the repair just with local nodes that requires a transfer of 270 MB.
\end{example}

\section{Experiments in Hadoop}\label{Hadoop}
The repair duality discussed in Example \ref{Ex:Duality} of previous section was mainly influenced by one system characteristic: the access time for starting a read operation. In different environments of distributed storage systems there are several similar system characteristics that can affect the repair duality and its final optimal procedure. We next discuss this matter for Hadoop.

Hadoop is an open-source software framework used for distributed storage and processing of big data sets \cite{white2012hadoop}. From release 3.0.0-alpha2 Hadoop offers several erasure codes such as $(9, 6)$ and $(14, 10)$ Reed-Solomon (RS) codes. Hadoop Distributed File System (HDFS) has the concepts of \emph{Splits} and \emph{Blocks}. A Split is a logical representation of the data while a Block describes the physical alignment of data. Splits and Blocks in Hadoop are user defined: a logical split can be composed of multiple blocks and one block can have multiple splits. All these choices determine in a more complex way the access time for I/O operations.

To verify the performance of HashTag codes and their locally repairable and locally regenerating variants we implemented them in C/C++ and used them in HDFS. 
%Hadoop is one of the most popular infrastructures for distributed computing \cite{white2012hadoop} and from the release 3.0.0-alpha2 it offers several erasure codes such as Reed-Solomon $(9, 6)$ and $(14, 10)$. 

For the code $(9, 6)$ we used one NameNode, nine DataNodes, and one client node. All nodes had a size of 50 GB and were connected with a local network of 10 Gbps. The nodes were running on Linux machines equipped with Intel Xeon E5-2676 v3 running on 2.4 GHz. We have experimented with different block sizes (90 MB and 360 MB), different split sizes (512 KB, 1 MB and 4 MB) and different sub-packetization levels ($\alpha=1, 3, 6,$ and $9$) in order to check how they affect the repair time of one lost node. The measured times to recover one node are presented in Fig. \ref{9_6_Hadoop01}. Note that the sub-packetization level $\alpha=1$ represents the RS code that is available in HDFS, while for every other $\alpha = 3, 6, 9$ the codes are HashTag codes. In all measurements HashTag codes outperform RS. The cost of having significant number of I/O operations for the sub-packetization level $\alpha=9$ is the highest for the smallest block and split size (Block size of 90 MB and Split size of 512 KB). This is shown by yellow bars. As the split sizes increase, the disadvantage of bigger number of I/Os due to the increased sub-packetization diminishes, and the repair time decreases further (red and blue bars). 
%{\bf \color{red} *** It would be excellent confirmation of our claims if Kjetil can provide us measurements with the same  (9,6) codes but with smaller split sizes like 256KB, 128 KB, 64 KB ***}
\begin{figure}[h!]
	%\centering
	\includegraphics[width=3.6in]{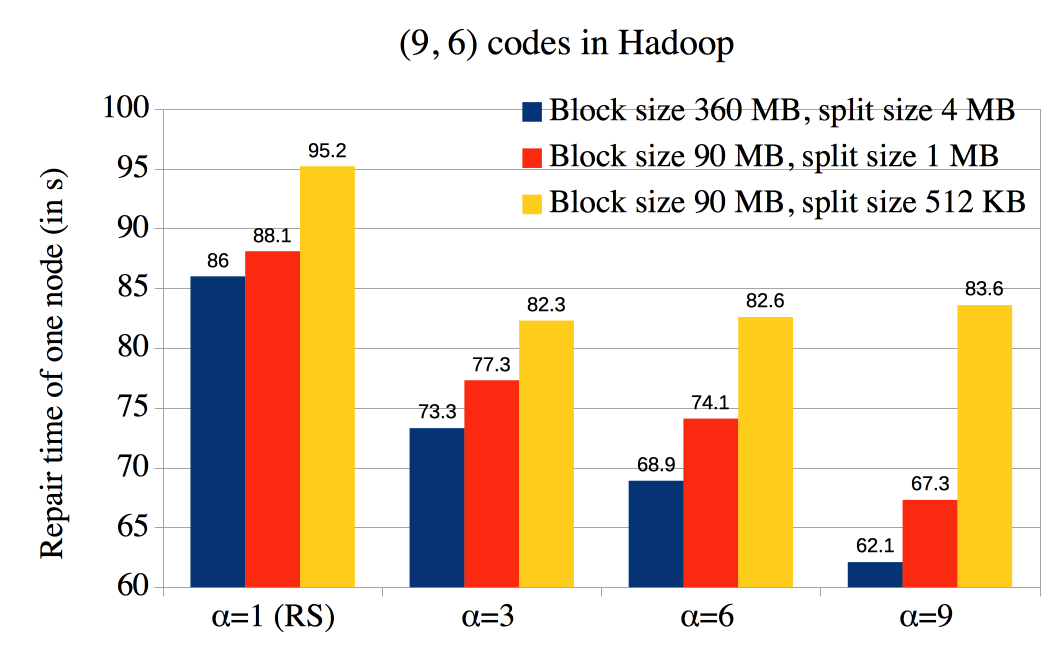}
	\vspace{-0.8cm}
	\caption{Experiments in HDFS. Time to repair one lost node of 50 GB with a $(9, 6)$ code for different sub-packetization levels $\alpha$. Note that the RS code for $\alpha=1$ is available in the latest release 3.0.0-alpha2 of Apache Hadoop.}
	\label{9_6_Hadoop01}
\end{figure}

In Fig. \ref{9_6_HadoopAlpha9}, we compare the repair times for one lost node of 50 GB with the codes from Examples \ref{Ex:HashTag0906alfa09-Split} and \ref{Ex:HashTag0906alfa09-Split3}. The cost of bigger redundancy with the locally repairable code $(10,6)$ (which is also a locally regenerative code since the sub-packetization level is $\alpha=9$) with $(l=2,\delta=2)$ (the red bar) is still not enough to outperform the ordinary $(9, 6)$ HashTag MSR code (the blue bar). However, paying even higher cost by increasing the redundancy for the other locally regenerative code $(11,6)$ with $(l=3,\delta=2)$ (the yellow bar) finally manages to outperform the repairing time for the ordinary $(9, 6)$ HashTag MSR code.
\begin{figure}[h!]
	%\centering
	\includegraphics[width=3.6in]{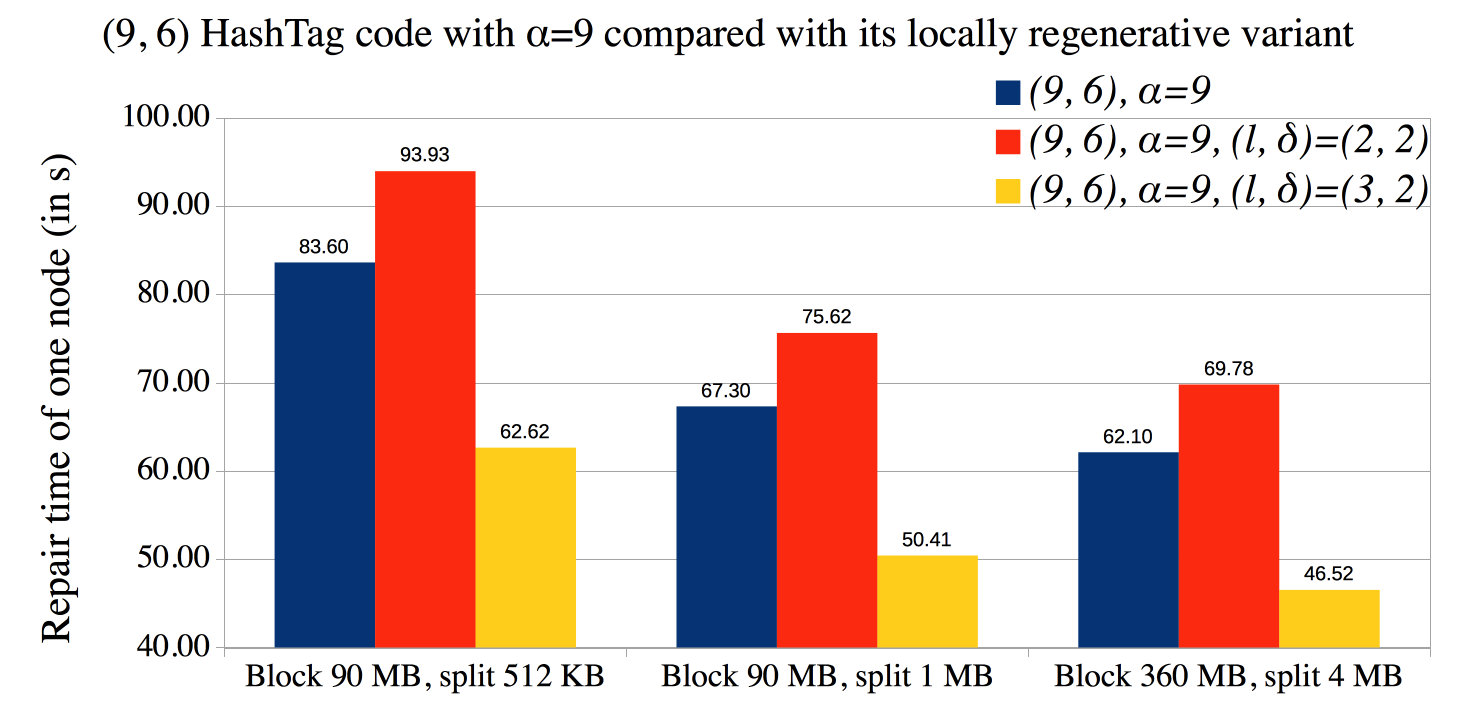}
	\vspace{-0.8cm}
	\caption{Comparison of repair times for one lost node of 50 GB for an ordinary $(9, 6)$ HashTag MSR code ($\alpha=9$), and its locally regenerative variants with $(l=2,\delta=2)$ and $(l=3,\delta=2)$.}
	\label{9_6_HadoopAlpha9}
\end{figure}

The situation of comparing the slightly less optimal $(9, 6)$ HashTag code with $\alpha=6$ with its locally repairable variants with $(l=2,\delta=2)$ and $(l=3,\delta=2)$ is different, and  this is presented in Fig. \ref{9_6_HadoopAlpha6}. In this case, all variants of locally repairable codes outperform the original HashTag code, i.e. they repair a failed node in shorter time than the original HashTag code from which they were constructed. 
\begin{figure}[h!]
	%\centering
	\includegraphics[width=3.6in]{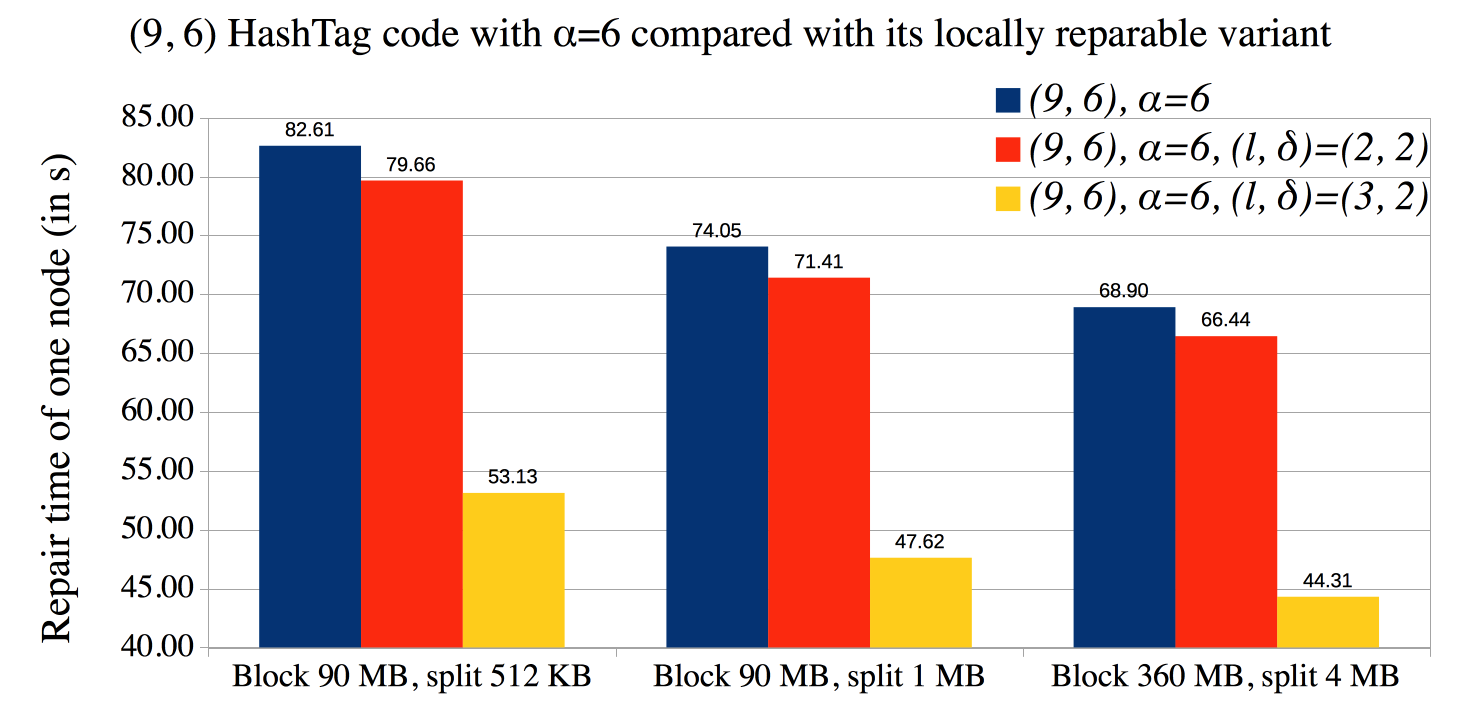}
	\vspace{-0.8cm}
	\caption{Comparison of repair times for one lost node of 50 GB for a $(9, 6)$ HashTag code with $\alpha=6$, and its locally repairable variants with $(l=2,\delta=2)$ and $(l=3,\delta=2)$.}
	\label{9_6_HadoopAlpha6}
\end{figure}

\section{Conclusions}\label{summary}
We constructed an explicit family of locally repairable and locally regenerating codes. We applied the technique of parity-splitting on HashTag codes and constructed codes with locality. For these codes we showed that there are two ways to repair a node (repair duality), and in practice which way is applied depends on optimization metrics such as the repair bandwidth, the number of I/O operations, the access time for the contacted parts and the size of the stored file. Additionally, we showed that the size of the finite field can be slightly lower than the theoretically obtained lower bound on the size in the literature. We hope that this work will inspire a further theoretical analysis for obtaining new lower bound.

%\vspace{0.5cm}
\bibliography{refer}
\bibliographystyle{IEEEtran}

\end{document}